\definecolor{darkgreen}{rgb}{0.0,0,0.9}
\newcommand{\mdsm}{\texttt{MDSM}}
\newcommand{\mwdsm}{\texttt{MWDSM}}
\newcommand{\mwcm}{\texttt{MWCM}}
\newcommand{\mwis}{\texttt{MWIS}}
\newcommand{\mcm}{\texttt{MCM}}
\newcommand{\setword}[2]{
  \phantomsection
  #1\def\@currentlabel{\unexpanded{#1}}\label{#2}
}
\numberwithin{equation}{section}
\renewcommand{\vec}[1]{\mathbf{#1}}
\newtheorem{theorem}{Theorem}[section]
\newtheorem{lemma}{Lemma}[section]
\newtheorem{observation}{Observation}[section]
\newtheorem{definition}{Definition}[section]
\title{Approximating Weighted Duo-Preservation in Comparative Genomics\footnote{Appeared in proceedings of the 23rd International Computing and Combinatorics Conference (COCOON 2017)~\cite{Mehrabi2017}. This work was done when the author was at the University of Waterloo.}}
\author{Saeed Mehrabi}
\affil{{\small School of Computer Science

Carleton University, Ottawa, Canada

				\url{mehrabi235@gmail.com}}
}
\date{}
\begin{document}

\maketitle

\begin{abstract}
Motivated by comparative genomics, Chen et al.~\cite{ChenCSPWT14} introduced the Maximum Duo-preservation String Mapping (\mdsm) problem in which we are given two strings $s_1$ and $s_2$ from the same alphabet and the goal is to find a mapping $\pi$ between them so as to maximize the number of duos preserved. A \emph{duo} is any two consecutive characters in a string and it is \emph{preserved} in the mapping if its two consecutive characters in $s_1$ are mapped to same two consecutive characters in $s_2$. The \mdsm~problem is known to be \textsc{NP}-hard and there are approximation algorithms for this problem~\cite{BoriaCCPPQ16,Brubach16,DudekGO17}, but all of them consider only the ``unweighted'' version of the problem in the sense that a duo from $s_1$ is preserved by mapping to any same duo in $s_2$ regardless of their positions in the respective strings. However, it is well-desired in comparative genomics to find mappings that consider preserving duos that are ``closer'' to each other under some distance measure~\cite{FoCGBook}.

In this paper, we introduce a generalized version of the problem, called the Maximum-Weight Duo-preservation String Mapping (\mwdsm) problem that captures both duos-preservation and duos-distance measures in the sense that mapping a duo from $s_1$ to each preserved duo in $s_2$ has a weight, indicating the ``closeness'' of the two duos. The objective of the \mwdsm~problem is to find a mapping so as to maximize the total weight of preserved duos. In this paper, we give a polynomial-time 6-approximation algorithm for this problem.
\end{abstract}

%%%%%%%%%%%% NEW SECTION %%%%%%%%%%%%%%%%%
\section{Introduction}
\label{sec:introduction}
Strings comparison is one of the central problems in the field of stringology with many applications such as in Data Compression and Bioinformatics. One of the most common goals of strings comparison is to measure the similarity between them, and one of the many ways in doing so is to compute the \emph{edit distance} between them. The edit distance between two strings is defined as the minimum number of edit operations to transform one string into the other. In biology, during the process of DNA sequencing for instance, computing the edit distance between the DNA molecules of different species can provide insight about the level of ``synteny'' between them; here, each edit operation is considered as a single mutation.

In the simplest form, the only edit operation that is allowed in computing the edit distance is to shift a block of characters; that is, to change the order of the characters in the string. Computing the edit distance under this operation reduces to the Minimum Common String Partition (MCSP) problem, which was introduced by Goldstein et al.~\cite{GoldsteinKZ05} (see also~\cite{SwensonMEM08}) and is defined as follows. For a string $s$, let $P(s)$ denote a partition of $s$. Given two strings $s_1$ and $s_2$ each of length $n$, where $s_2$ is a permutation of $s_1$, the objective of the MCSP problem is to find a partition $P(s_1)$ of $s_1$ and $P(s_2)$ of $s_2$ of minimum cardinality such that $P(s_2)$ is a permutation of $P(s_1)$. The problem is known to be \textsc{NP}-hard and even \textsc{APX}-hard~\cite{GoldsteinKZ05}. 

Recently, Chen et al.~\cite{ChenCSPWT14} introduced a maximization version of the MCSP problem, called the \emph{Maximum Duo-preservation String Mapping} (\mdsm) problem. A \emph{duo} in a string $s$ is a pair of consecutive characters in $s$. For two strings $s_1$ and $s_2$, where $s_2$ is a permutation of $s_1$ under a mapping $\pi$, we say that a \emph{duo is preserved} in the mapping $\pi$, if its two consecutive characters are mapped to same two consecutive characters in $s_2$. Notice that if partitions $P(s_1)$ and $P(s_2)$ are a solution of size $r$ for an instance of the MCSP problem, then this solution can be used to obtain a mapping $\pi$ from $s_1$ to $s_2$ that preserve $n-r$ duos. As such, given two strings $s_1$ and $s_2$, the objective of the \mdsm~problem is to compute a mapping $\pi$ from $s_1$ to $s_2$ that preserves a maximum number of duos. See Figure~\ref{fig:duoPreservation} for an example.
\begin{figure}[t]
\centering
\includegraphics[width=0.90\textwidth]{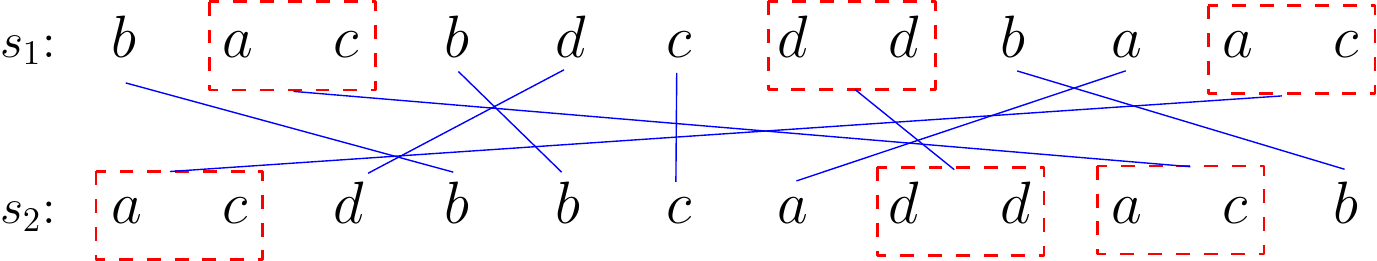}
\caption{An instance of the \mdsm~problem in which the mapping $\pi$ preserves three duos.}
\label{fig:duoPreservation}
\end{figure}

\paragraph{{\bf Related Work.}} Since the MCSP problem is \textsc{NP}-hard~\cite{GoldsteinKZ05}, there has been many works on designing polynomial-time approximation algorithms for this problem~\cite{ChenZFNZLJ05,ChrobakKS04,CormodeM07,GoldsteinKZ05,KolmanW07}. The best approximation results thus far are an $O(\log n \log^* n)$-approximation algorithm for the general version of the problem~\cite{CormodeM07}, and an $O(k)$-approximation for the $k$-MCSP problem~\cite{CormodeM07} (the $k$-MCSP is a variant of the problem in which each character can appear at most $k$ times in each string). In terms of the parameterized complexity, the problem is known to be fixed-parameter tractable with respect to $k$ and the size of an optimal partition~\cite{BulteauFKR13,JiangZZZ12}, as well as the size of an optimal solution only~\cite{BulteauK14}. For the \mdsm~problem, we observe that since the MCSP problem is \textsc{NP}-hard~\cite{GoldsteinKZ05}, the \mdsm~problem (i.e., its maximization version) is also \textsc{NP}-hard, and in fact even \textsc{APX}-hard~\cite{BoriaKLM14}. Moreover, the problem is also shown to be fixed-parameter tractable with respect to the number of duos preserved~\cite{0001CD16}. Boria et al.~\cite{BoriaKLM14} gave a 4-approximation algorithm for the \mdsm~problem, which was subsequently improved to algorithms with approximation factors of 7/2~\cite{BoriaCCPPQ16}, 3.25~\cite{Brubach16} and (recently) $(2+\epsilon)$ for any $\epsilon>0$~\cite{DudekGO17}.

\paragraph{{\bf Motivation and Problem Statement.}} Observe that in the \mdsm~problem mapping a duo from $s_1$ to a preserved duo in $s_2$ does not consider the position of the two duos in $s_1$ and $s_2$. In Figure~\ref{fig:duoPreservation}, for instance, the first $ac$ in $s_1$ is mapped to the second $ac$ in $s_2$ and the second $ac$ in $s_1$ is mapped to the first $ac$ in $s_2$. But, another (perhaps more realistic) mapping would be the one that maps the first $ac$ in $s_1$ to the first $ac$ in $s_2$ and the second one in $s_1$ to the second one in $s_2$. The latter mapping would become more desirable when solving the problem on strings of extremely long length. In fact, considering the applications of the \mdsm~problem in comparative genomics, it is much more desirable to find mappings that take into account the position of the preserved features in the two sequences~\cite{FoCGBook,Hardison03}. One reason behind this is the fact that focusing on giving priority to preserving features that are ``closer'' to each other (distance-wise under some distance measure) provides better information about the ``synteny'' of the corresponding species~\cite{FoCGBook,Hardison03}.

In this paper, we introduce a more general variant of the \mdsm~problem, called the Maximum-Weight Duo-preservation String Mapping (\mwdsm) problem. In this problem, in addition to $s_1$ and $s_2$, we are also given a \emph{weight function} defined on pairs of duos that considers the position of the duos in $s_1$ and $s_2$, and so better captures the concept of ``synteny'' in comparative genomics. Now, the objective becomes maximizing the total weight of the preserved duo (instead of maximizing the number of them). Let us define the problem more formally. For a string $s$, we denote by $D(s)$ the set of all duos in $s$ ordered from left to right. For example, if $s=acbbda$, then $D(s)=\{ac, cb, bb, bd, da\}$.
\begin{definition}[The \mwdsm~Problem]
\label{def:MWDSM}
Let $s_1$ and $s_2$ be two strings of length $n$. Moreover, let $\vec{w}:D(s_1)\times D(s_2)\rightarrow \mathbb{R^+}$ denote a weight function. Then, the \mwdsm~problem asks for a mapping $\pi$ from $s_1$ to $s_2$ that preserve a set $S$ of duos such that
\[
\sum_{d\in S}w(d, \pi(d))
\]
is maximized over all such sets $S$, where $\pi(d)$ denotes the duo in $s_2$ to which $d\in s_1$ is mapped.
\end{definition}
We note that the weight function is very flexible in the sense that it can capture any combination of duos-preservation and duos-distance measures. To our knowledge, this is the first formal study of a ``weighted version'' of the \mdsm~problem.

\paragraph{{\bf Our Result.}} Notice that the \mwdsm~problem is \textsc{NP}-hard as its unweighted variant (i.e., the \mdsm~problem) is known to be \textsc{NP}-hard~\cite{GoldsteinKZ05}. We note that the previous approximation algorithms for the \mdsm~problem do not apply to the \mwdsm~problem. In particular, both 7/2-approximation algorithm of Boria et al.~\cite{BoriaCCPPQ16} and $(2+\epsilon)$-approximation algorithm of Dudek et al.~\cite{DudekGO17} are based on the local search technique, which is known to fail for weighted problems~\cite{MustafaR10,ChanH12}. Moreover, the 3.25-approximation algorithm of Brubach~\cite{Brubach16} relies on a triplet matching approach, which involves finding a weighted matching (with specialized weights) on a particular graph, but it is not clear if the approach could handle the \mwdsm~problem with any arbitrary weight function $w$. Finally, while the linear programming algorithm of Chen et al.~\cite{ChenCSPWT14} might apply to the \mwdsm~problem, the approximation factor will likely stay the same, which is $k^2$, where $k$ is the maximum number of times each character appears in $s_1$ and $s_2$.

In this paper, we give a polynomial-time 6-approximation algorithm for the \mwdsm~problem for any arbitrary weight function $w$. To this end, we construct a vertex-weighted graph corresponding to the \mwdsm~problem and show that the problem reduces to the \emph{Maximum-Weight Independent Set (\mwis)} problem on this graph. Then, we apply the \emph{local ratio technique} to approximate the \mwis~problem on this graph. The local ratio technique was introduced by Bar-Yehuda and Even~\cite{BarYehudaE85}, and is used for designing approximation algorithms for mainly weighted optimization problems (see Section~\ref{sec:prelimins} for a formal description of this technique). While the approximation factor of our algorithm is slightly large in compare to that of algorithms for the unweighted version of the problem~\cite{BoriaCCPPQ16,Brubach16}, as we now have weights, our algorithm is much simpler as it benefits from the simplicity of the local ratio technique. To our knowledge, this is the first application of the local ratio technique to problems in stringology.

\paragraph{{\bf Organization.}} The paper is organized as follows. We first give some definitions and preliminary results in Section~\ref{sec:prelimins}. Then, we present our 6-approximation algorithm in Section~\ref{sec:approxAlg}, and will conclude the paper with a discussion on future work in Section~\ref{sec:conclusion}.

%%%%%%%%%%%% NEW SECTION %%%%%%%%%%%%%%%%%
\section{Preliminaries}
\label{sec:prelimins}
In this section, we give some definitions and preliminaries. For a graph $G$, we denote the set of vertices and edges of $G$ by $V(G)$ and $E(G)$, respectively. For a vertex $u\in V(G)$, we denote the set of neighbours of $u$ by $N[u]$; note that $u\in N[u]$.

Let $\vec{w}\in \mathbb{R}^n$ be a weight vector, and let $F$ be a set of feasibility constraints on vectors $\vec{x}\in \mathbb{R}^n$. A vector $\vec{x}\in\mathbb{R}^n$ is a feasible solution to a given problem $(F, \vec{p})$ if it satisfies all of the constraints in $F$. The value of a feasible solution $\vec{x}$ is the inner product $\vec{w}\cdot\vec{x}$. A feasible solution is \emph{optimal} for a maximization (resp., minimization) problem if its value is maximal (resp., minimal) among all feasible solutions. A feasible solution $\vec{x}$ is an $\alpha$-approximation solution, or simply an $\alpha$-approximation, for a maximization problem if $\vec{w}\cdot\vec{x}\geq \alpha\cdot\vec{w}\cdot\vec{x}^*$, where $\vec{x}^*$ is an optimal solution. An algorithm is said to have an \emph{approximation factor} of $\alpha$ (or, it is called an \emph{$\alpha$-approximation algorithm}), if it always computes $\alpha$-approximation solutions.

\paragraph{{\bf Local Ratio.}} Our approximation algorithm uses the \emph{local ratio technique}. This technique was first developed by Bar-Yehuda and Even~\cite{BarYehudaE85}. Let us formally state the local ratio theorem.
\begin{theorem}~\cite{BarYehudaE85}
\label{thm:localRatio}
Let $F$ be a set of constraints, and let $\vec{w}, \vec{w}_1$ and $\vec{w}_2$ be weight vectors where $\vec{w}=\vec{w}_1+\vec{w}_2$. If $\vec{x}$ is an $\alpha$-approximation solution with respect to $(F, \vec{w}_1)$ and with respect to $(F, \vec{w}_2)$, then $\vec{x}$ is an $\alpha$-approximation solution with respect to $(F, \vec{w})$.
\end{theorem}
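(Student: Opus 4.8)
The plan is to prove the statement directly from the definition of an $\alpha$-approximation, using the bilinearity of the inner product together with the optimality of the per-objective optimal solutions. I will treat the maximization case explicitly (so that $0<\alpha\le 1$); the minimization case is entirely symmetric, with all inequalities reversed and $\alpha\ge 1$.

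First I would fix an optimal solution $\vec{x}^*$ for $(F,\vec{w})$, and also let $\vec{x}_1^*$ and $\vec{x}_2^*$ be optimal solutions for $(F,\vec{w}_1)$ and $(F,\vec{w}_2)$, respectively. The point to keep track of is that the hypothesis gives a guarantee on $\vec{x}$ only relative to the optimum of \emph{each separate} objective, not relative to $\vec{w}_i\cdot\vec{x}^*$; these are bridged by noting that $\vec{x}^*$ is itself feasible for $(F,\vec{w}_1)$ and $(F,\vec{w}_2)$, since it satisfies the same constraint set $F$. Hence $\vec{w}_1\cdot\vec{x}_1^*\ge \vec{w}_1\cdot\vec{x}^*$ and $\vec{w}_2\cdot\vec{x}_2^*\ge \vec{w}_2\cdot\vec{x}^*$ by optimality, and combining with the two approximation guarantees (using $\alpha>0$ so that multiplication preserves the inequalities) yields
\[
\vec{w}_1\cdot\vec{x}\;\ge\;\alpha\,(\vec{w}_1\cdot\vec{x}_1^*)\;\ge\;\alpha\,(\vec{w}_1\cdot\vec{x}^*),\qquad
\vec{w}_2\cdot\vec{x}\;\ge\;\alpha\,(\vec{w}_2\cdot\vec{x}_2^*)\;\ge\;\alpha\,(\vec{w}_2\cdot\vec{x}^*).
\]
Adding the two inequalities and invoking $\vec{w}=\vec{w}_1+\vec{w}_2$ with linearity of the inner product gives
\[
\vec{w}\cdot\vec{x}\;=\;\vec{w}_1\cdot\vec{x}+\vec{w}_2\cdot\vec{x}\;\ge\;\alpha\,(\vec{w}_1\cdot\vec{x}^*+\vec{w}_2\cdot\vec{x}^*)\;=\;\alpha\,(\vec{w}\cdot\vec{x}^*),
\]
which is exactly the assertion that $\vec{x}$ is an $\alpha$-approximation for $(F,\vec{w})$.

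There is no real obstacle in the argument; the only place that needs care is the step from $\vec{w}_i\cdot\vec{x}_i^*$ to $\vec{w}_i\cdot\vec{x}^*$, where one must use the per-objective optimum when applying the approximation bound (rather than conflating it with $\vec{w}_i\cdot\vec{x}^*$) and must rely on $\alpha$ being nonnegative so the inequality direction is preserved after scaling. For a minimization objective one instead takes $\alpha\ge 1$, uses $\vec{w}_i\cdot\vec{x}\le\alpha\,(\vec{w}_i\cdot\vec{x}_i^*)\le\alpha\,(\vec{w}_i\cdot\vec{x}^*)$, and the same two-line computation delivers $\vec{w}\cdot\vec{x}\le\alpha\,(\vec{w}\cdot\vec{x}^*)$.
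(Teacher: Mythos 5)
Your argument is correct: it is the standard proof of the local ratio theorem, and the key step—bridging from the per-objective optima $\vec{x}_1^*,\vec{x}_2^*$ to the common optimum $\vec{x}^*$ via the fact that feasibility depends only on $F$, together with $\alpha\ge 0$—is exactly where the content lies. The paper itself states this theorem without proof, citing Bar-Yehuda and Even, so there is nothing to compare against beyond noting that your proof matches the classical one.
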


We now describe how the local ratio technique is usually used for solving a problem. First, the solution set is empty. The idea is to find a decomposition of the weight vector $\vec{w}$ into $\vec{w}_1$ and $\vec{w}_2$ such that $\vec{w}_1$ is an ``easy'' weight function in some sense (we will discuss this in more details later). The local ratio algorithm continues recursively on the instance $(F, \vec{w}_2)$. We assume inductively that the solution returned recursively for the instance $(F, \vec{w}_2)$ is a good approximation and need to prove that it is also a good approximation for $(F, \vec{w})$. This requires proving that the solution returned recursively for the instance $(F, \vec{w}_2)$ is also a good approximation for the instance $(F, \vec{w}_1)$. This step is usually the main part of the proof of the approximation factor.

\paragraph{{\bf Graph $G_I$.}} Given an instance of the \mwdsm~problem, we first construct a bipartite graph $G_I=(A\cup B, E)$ as follows. The vertices in the left-side set $A$ are the duos in $D(s_1)$ in order from top to bottom and the vertices in the right-side set $B$ are the duos in $D(s_2)$ in order from top to bottom. There exists an edge between two vertices if and only if they represent the same duo. See Figure~\ref{fig:gExample} for an example. Boria et al~\cite{BoriaKLM14} showed that the \mdsm~problem on $s_1$ and $s_2$ reduces to the Maximum Constrained Matching (\mcm) problem on $G_I$, which is defined as follows. Let $A=a_1,\dots,a_n$ and $B=b_1,\dots,b_n$. Then, we are interested in computing a maximum-cardinality matching $M$ such that if $(a_i, b_j)\in M$, then $a_{i+1}$ can be only matched to $b_{j+1}$ and $b_{j+1}$ can be only matched to $a_{i+1}$. In the following, we first assign weights to the edges of $G_I$ and will then show that a similar reduction holds between the \mwdsm~problem on $s_1$ and $s_2$, and a weighted version of the \mcm~problem on $G_I$.

\begin{figure}[t]
\centering
\includegraphics[width=0.25\textwidth]{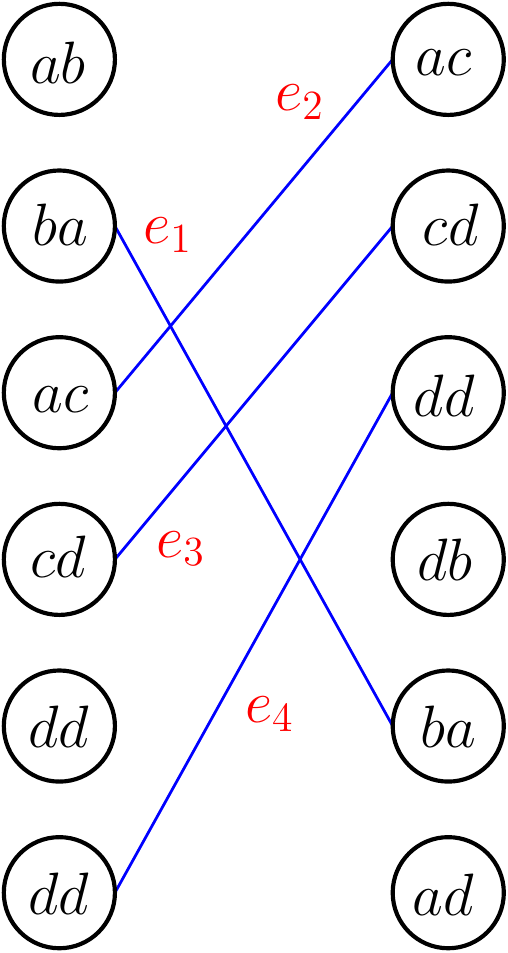}
\caption{The graph $G_I$ corresponding to $s_1=abacddd$ and $s_2=acddbad$.}
\label{fig:gExample}
\end{figure}

To weigh the edges of $G_I$, we simply assign $w(a_l, b_r)$ as the weight of $e$, for all $e\in E(G_I)$, where $a_l\in A$ and $b_r\in B$ are the endpoints of $e$ and $w(a_l, b_r)$ is given by Definition~\ref{def:MWDSM}. Now, we define the Maximum-Weight Constrained Matching (\mwcm) problem as the problem of computing a maximum-weight matching $M$ in $G_I$ such that if $(a_i, b_j)\in M$, then $a_{i+1}$ can be only matched to $b_{j+1}$ and $b_{j+1}$ can be only matched to $a_{i+1}$. To see the equivalence between the \mwdsm~problem on $s_1$ and $s_2$ and the \mwcm~problem on $G_I$, let $S$ be a feasible solution to the \mwdsm~problem with total weight $w(S)$ determined by a mapping $\pi$. Then, we can obtain a feasible solution $S'$ for the \mwcm~problem on $G_I$ by selecting the edges in $G_I$ that correspond to the preserved duos in $S$ determined by $\pi$ such that $w(S')=w(S)$. Moreover, it is not too hard to see that any feasible solution for the \mwcm~problem on $G_I$ gives a feasible solution for the \mwdsm~problem with the same weight. This gives us the following lemma.
\begin{lemma}
\label{lem:mwcmAndMWDSM}
The \mwdsm~problem on $s_1$ and $s_2$ reduces to the \mwcm~problem on $G_I$.
\end{lemma}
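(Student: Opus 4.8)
The plan is to prove the reduction by exhibiting weight-preserving maps between feasible solutions of the two problems in both directions, and then to conclude that the two problems have the same optimum value and, more usefully for later, that any feasible \mwcm{} solution on $G_I$ can be turned in polynomial time into a feasible \mwdsm{} solution of no smaller weight.

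First, for the forward direction I would start from a feasible \mwdsm{} solution, i.e.\ a mapping $\pi$ from $s_1$ to $s_2$ preserving a duo set $S$. For each preserved duo $d\in S$ occupying positions $i,i+1$ in $s_1$ and mapped to positions $j,j+1$ in $s_2$ (so $\pi(i)=j$ and $\pi(i+1)=j+1$), I put the edge $(a_i,b_j)$ into a set $S'$. Then I would check four things: (i) $S'\subseteq E(G_I)$, since $d=\pi(d)$ means $a_i$ and $b_j$ represent the same duo; (ii) $S'$ is a matching, because distinct preserved duos yield distinct left endpoints and distinct right endpoints (each position is sent to exactly one position by $\pi$); (iii) $S'$ obeys the constraint: if $(a_i,b_j)\in S'$ and $a_{i+1}$ is also matched, say to $b_k$, then the duo at positions $i+1,i+2$ is preserved and mapped to $k,k+1$, so $\pi(i+1)=k$; but $\pi(i+1)=j+1$, hence $k=j+1$, and symmetrically $b_{j+1}$ can only be matched to $a_{i+1}$ using injectivity of $\pi$; (iv) $w(S')=\sum_{(a_i,b_j)\in S'}w(a_i,b_j)=\sum_{d\in S}w(d,\pi(d))=w(S)$ by our choice of edge weights.

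For the converse I would start from a feasible \mwcm{} solution $M$ on $G_I$ and define $\pi$ by declaring $\pi(i)=j$ and $\pi(i+1)=j+1$ for every edge $(a_i,b_j)\in M$. The main point is to verify that this partial assignment is a well-defined, character-respecting partial injection: well-definedness is exactly where the \mcm{} constraint enters, since if position $i$ receives a value both from $(a_i,b_j)\in M$ and from $(a_{i-1},b_{j'})\in M$, the constraint on the latter edge forces $j=j'+1$, so the two values agree; injectivity follows similarly from $M$ being a matching together with the constraint; and $(a_i,b_j)\in E(G_I)$ gives $s_1[i]=s_2[j]$ and $s_1[i+1]=s_2[j+1]$, so characters are respected. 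I would then extend this partial injection to a full bijection $\pi$ from $s_1$ to $s_2$ respecting characters, which is possible because $s_2$ is a permutation of $s_1$ and hence, for each symbol, the still-unassigned positions are equinumerous on the two sides. Under this $\pi$ every duo corresponding to an edge of $M$ is preserved, so its preserved set $S$ satisfies $w(S)\ge w(M)$.

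Putting the two directions together shows that the optimal values of the \mwdsm{} and \mwcm{} instances coincide and that a feasible \mwcm{} solution is efficiently convertible into a \mwdsm{} solution of at least the same weight, which is precisely the reduction asserted by the lemma. I do not anticipate a genuine difficulty here; the only step that requires care is item~(iii) together with the well-definedness check in the converse, where the somewhat awkward phrasing ``$a_{i+1}$ can only be matched to $b_{j+1}$ and $b_{j+1}$ can only be matched to $a_{i+1}$'' must be translated into consistency and injectivity of the induced map on positions. Everything else is routine bookkeeping about bijections on positions.
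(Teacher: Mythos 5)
Your proof is correct and takes essentially the same approach as the paper, which merely sketches the two weight-preserving correspondences between feasible solutions and declares the converse direction ``not too hard to see.'' You supply the details the paper omits---verifying that the preserved duos induce a constrained matching, and that the constrained matching induces a well-defined, character-respecting partial injection extendable to a full mapping---so your argument is, if anything, more complete than the paper's own.
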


By Lemma~\ref{lem:mwcmAndMWDSM}, any feasible solution $M$ with total weight $w(M)$ for the \texttt{MWCM} problem on $G_I$ gives a mapping $\pi$ between the strings $s_1$ and $s_2$ that preserves a set of duos with total weight $w(M)$. As such, for the rest of this paper, we focus on the \mwcm~problem on $G_I$ and give a polynomial-time 6-approximation algorithm for this problem on $G_I$, which by Lemma~\ref{lem:mwcmAndMWDSM}, results in an approximation algorithm with the same approximation factor for the \mwdsm~problem on $s_1$ and $s_2$.

%%%%%%%%%%%% NEW SECTION %%%%%%%%%%%%%%%%%
\section{Approximation Algorithm}
\label{sec:approxAlg}
In this section, we give a 6-approximation algorithm for the \mwcm~problem. We were unable to apply the local ratio directly to the \mwcm~problem on $G_I$ due to the constraint involved in the definition of the problem. Instead, we construct a vertex-weighted graph $G_C$, called the \emph{conflict graph}, and show that the \mwcm~problem on $G_I$ reduces to the Maximum-Weight Independent Set (\mwis) problem on $G_C$. We then apply the local ratio to approximate the \mwis~problem on $G_C$, which results in an approximation algorithm for the \mwcm~problem on $G_I$. Consequently, this gives us an approximation algorithm for the \mwdsm~problem on $s_1$ and $s_2$ by Lemma~\ref{lem:mwcmAndMWDSM}.

\paragraph{{\bf Graph $G_C$.}} We now describe the concept of \emph{conflict}. We say that two edges in $E(G_I)$ are \emph{conflicting} if they both cannot be in a feasible solution for the \mwcm~problem at the same time, either because they share an endpoint or their endpoints are consecutive on one side of the graph, but not on the other side. The following observation is immediate.
\begin{observation}
\label{obs:conflictingCondition}
Let $e_1=(a_i, b_j)$ and $e_2=(a_k, b_l)$ be two conflicting edges in $E(G_I)$. Then, either $k\in\{i-1, i, i+1\}$ or $l\in\{j-1, j, j+1\}$.
\end{observation}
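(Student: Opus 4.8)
The plan is to read off the conclusion directly from the definition of \emph{conflicting} given just above the observation, via an exhaustive case analysis. By that definition, the edges $e_1=(a_i,b_j)$ and $e_2=(a_k,b_l)$ can conflict for exactly one of two reasons: (i) they share an endpoint, or (ii) their endpoints are consecutive on one side of the bipartition of $G_I$ but not on the other. So it suffices to verify that each reason forces $k\in\{i-1,i,i+1\}$ or $l\in\{j-1,j,j+1\}$.

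First I would dispatch case (i). If $e_1$ and $e_2$ share an endpoint in $A$, then $a_i=a_k$, hence $k=i\in\{i-1,i,i+1\}$; if they share an endpoint in $B$, then $b_j=b_l$, hence $l=j\in\{j-1,j,j+1\}$. Either way the conclusion holds, independently of the other index. Next I would handle case (ii), which splits into two symmetric subcases according to which side of $G_I$ carries the consecutive pair. If $a_i$ and $a_k$ are consecutive in the ordered list $A=a_1,\dots,a_n$, then $|i-k|=1$, so $k\in\{i-1,i+1\}\subseteq\{i-1,i,i+1\}$; if instead $b_j$ and $b_l$ are consecutive in $B=b_1,\dots,b_n$, then $|j-l|=1$, so $l\in\{j-1,j+1\}\subseteq\{j-1,j,j+1\}$. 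In both subcases the conclusion holds. Since cases (i) and (ii) are exhaustive, the observation follows.

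I do not expect a genuine obstacle here: the statement is essentially a reformulation of the definition. The only point meriting a sentence of care is confirming that the constraint in the \mwcm~problem — ``if $(a_i,b_j)\in M$, then $a_{i+1}$ can be matched only to $b_{j+1}$, and $b_{j+1}$ only to $a_{i+1}$'' — produces no conflicting pair outside those already covered by (i)--(ii). But any violation of that constraint pairs an edge $(a_i,b_j)$ with an edge whose $A$-index is $i+1$ (or $i-1$), or whose $B$-index is $j+1$ (or $j-1$), so the $A$-indices or the $B$-indices differ by exactly one, which is precisely what (ii) records. Hence the enumeration above is complete, and the proof reduces to the routine bookkeeping just described.
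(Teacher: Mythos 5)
Your case analysis is correct and is exactly the routine unpacking of the definition of \emph{conflicting} that the paper has in mind; the paper itself gives no proof, simply declaring the observation ``immediate.'' Nothing is missing: cases (i) and (ii) exhaust the definition, and each forces the stated index condition.
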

\begin{wrapfigure}{r}{0.30\textwidth}
\centering
\vspace{-5mm}
\includegraphics[scale=0.65]{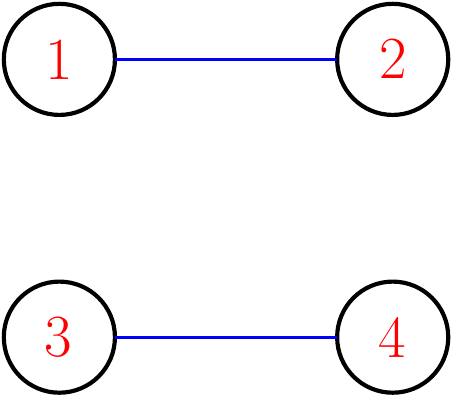}
\vspace{-5mm}
\end{wrapfigure}

We define the \emph{conflict graph} $G_C$ as follows. Let $V(G_C)$ be $E(G_I)$; that is, $V(G_C)$ is the set of all edges in $G_I$. For a vertex $i\in V(G_C)$, we denote the edge in $E(G_I)$ corresponding to $i$ by $e_i$. Two vertices $i$ and $j$ are adjacent in $G_C$ if and only if $e_i$ and $e_j$ are conflicting in $G_I$. The conflict graph $G_C$ corresponding to the graph $G_I$ in Figure~\ref{fig:gExample} is shown on the right.

To assign weights to the vertices of $G_C$, let $i$ be a vertex of $G_C$. Notice that $i$ corresponds to the edge $e_i=(a_l, b_r)$ in $E(G_I)$, where $a_l\in A$ and $b_r\in B$ are preserved duos. Then, the weight of vertex $i$ is defined as $w(i):=w(a_l, b_r)$ in which recall that $w(a_l, b_r)$ is the weight assigned to these preserved duos by Definition~\ref{def:MWDSM}. Although not precisely defined, we again note that the weight function is very flexible and it can capture any combination of duos-preservation and duos-distance measures.
\begin{lemma}
\label{lem:reductionToMIS}
The \mwcm~problem on $G_I$ reduces to the \mwis~problem on $G_C$.
\end{lemma}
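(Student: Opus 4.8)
The plan is to establish a weight-preserving bijection between feasible solutions of the \mwcm~problem on $G_I$ and independent sets in $G_C$, so that optimal solutions correspond and any $\alpha$-approximation for one yields an $\alpha$-approximation for the other. First I would recall that $V(G_C)=E(G_I)$ and that the weight of a vertex of $G_C$ equals the weight of the corresponding edge of $G_I$; hence, for any subset $T\subseteq V(G_C)$, the total vertex-weight of $T$ equals the total edge-weight of the corresponding set $M_T\subseteq E(G_I)$. It therefore suffices to show that $M_T$ is a feasible solution for the \mwcm~problem if and only if $T$ is an independent set in $G_C$.

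For the forward direction, suppose $T$ is independent in $G_C$. Then no two edges in $M_T$ are conflicting in $G_I$, so in particular no two of them share an endpoint (that is the first way edges can conflict), which makes $M_T$ a matching. Moreover, the feasibility constraint of the \mwcm~problem — if $(a_i,b_j)\in M$ then $a_{i+1}$ may only be matched to $b_{j+1}$ and vice versa — is violated exactly when $M_T$ contains two edges whose endpoints are consecutive on one side but not on the other, which is precisely the second way two edges conflict. Since $T$ is independent, no such pair exists, so $M_T$ satisfies the constraint and is feasible. For the reverse direction, if $M_T$ is feasible, it is a matching (ruling out shared endpoints) and it satisfies the consecutivity constraint (ruling out the mismatched-consecutive situation), so no two edges of $M_T$ conflict, i.e.\ $T$ is independent.

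The main technical point — and the step I expect to require the most care — is the equivalence between the \mwcm~constraint and the absence of ``consecutive on one side but not the other'' conflicts. Concretely, I would argue that if $(a_i,b_j)$ and $(a_k,b_l)$ both lie in a feasible matching with $k=i+1$, then the constraint forces $l=j+1$, and symmetrically if $l=j+1$ then $k=i+1$; contrapositively, a pair with $k=i+1$ but $l\neq j+1$ (or $l=j+1$ but $k\neq i+1$) cannot coexist in any feasible solution, which is exactly the conflict condition. One should also double-check the boundary of this reasoning using Observation~\ref{obs:conflictingCondition}, which confirms that conflicting edges are always ``local'' in the sense of differing by at most one index on one side, so the two listed causes of conflict are exhaustive. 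Once this equivalence is in hand, the bijection $T\mapsto M_T$ preserves feasibility and weight in both directions, so a maximum-weight independent set in $G_C$ corresponds to a maximum-weight feasible matching in $G_I$, and the reduction — together with the preservation of approximation ratios — follows immediately.
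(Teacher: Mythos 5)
Your proof is correct and follows essentially the same route as the paper's: the identity correspondence between $E(G_I)$ and $V(G_C)$ is weight-preserving by construction, and feasibility of an edge set for the \mwcm{} problem is equivalent to independence of the corresponding vertex set because the definition of ``conflicting'' is precisely the negation of pairwise feasibility. The paper simply runs the two directions separately and invokes that definition directly, while you additionally unpack why the two listed causes of conflict exhaust the ways the \mwcm{} constraint can fail; this is a harmless elaboration, not a different argument.
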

\begin{proof}
Suppose that $S$ is a feasible solution to the \mwcm~problem on $G_I$ with total weight $w(S)$. For each edge $e_i\in S$, add the vertex $i\in V(G_C)$ to $S'$. Clearly, $S'$ is an independent set because two vertices $i$ and $j$ in $S'$ being adjacent would imply that $e_i$ and $e_j$ are conflicting in $G_I$, contradicting the feasibility of $S$. To see $w(S')$, notice that
\[
w(S')=\sum_{i\in S'}w(i)=\sum_{e_i=(a_l, b_r)\in S}w(a_l, b_r)=\sum_{e_i\in S}w(e_i)=w(S).
\]

Now, suppose that $S'$ is an independent set in $G_C$ with total weight $w(S')$. For each $u\in S'$, add $e_u\in E(G_I)$ to $S$. First, $S$ is a feasible solution for the \mwcm~problem on $G_I$ because the vertices of $G_C$ corresponding to any two conflicting edges in $S$ would be adjacent in $G_C$, contradicting the fact that $S'$ is an independent set. Moreover,
\[
w(S)=\sum_{e_i\in S}w(e_i)=\sum_{e_i=(a_l, b_r)\in S}w(a_l, b_r)=\sum_{i\in S'}w(i)=w(S').
\]
This completes the proof of the lemma.
\end{proof}

By Lemma~\ref{lem:reductionToMIS}, any approximation algorithm for the \mwis~problem on $G_C$ results in an approximation algorithm with the same factor for the \mwcm~problem on $G_I$. As such, for the rest of this section, we focus on the \mwis~problem on $G_C$ and show how to apply the local ratio technique to compute a 6-approximation algorithm for this problem on $G_C$.

\paragraph{{\bf Approximating the \mwis~Problem on $G_C$.}} We first formulate the \mwis~problem on $G_C$ as a linear program. We define a variable $x(u)$ for each vertex $u\in V(G_C)$; if $x(u)=1$, then vertex $u$ belongs to the independent set. The integer program assigns the binary values to the vertices with the constraint that for each clique $Q$, the sum of the values assigned to all vertices in $Q$ is at most 1.
\begin{align}
\label{aln:ip}
\text{maximize }        & \sum_{u\in V(G_C)}w(u)\cdot x(u)\\
\nonumber \text{subject to }      & \sum_{v\in Q}x(v)\leq 1 & \forall \mbox{ cliques } Q\in G_C,\\
\nonumber                         & x(u)\in\{0,1\} & \forall u\in V(G_C)
\end{align}

Note that the number of constraints in~\eqref{aln:ip} can be exponential in general, as the number of cliques in $G_C$ could be exponential. However, for the \mwis~problem on $G_C$, we can consider only a polynomial number of cliques in $G_C$. To this end, let $u=(a_i, b_j)$ be a vertex in $G_C$. By Observation~\ref{obs:conflictingCondition}, if $v=(a_k, b_l)$ is in conflict with $u=(a_i, b_j)$, then either $k\in\{i-1, i, i+1\}$ or $l\in\{j-1, j, j+1\}$. Let $S^{i-1}_u$ denote the set of all neighbours $v$ of $u$ in $G_C$ such that $v=(a_{i-1}, b_s)$ for some $b_s\in B$ (recall the bipartite graph $G_I=(A\cup B, E)$). Define $S^i_u$ and $S^{i+1}_u$ analogously. Similarly, let $S^{j-1}_u$ be the set of all neighbours $v$ of $u$ such that $v=(a_s, b_{j-1})$ for some $a_s\in A$, and define $S^j_u$ and $S^{j+1}_u$ analogously. Let $M:=\{i-1, i, i+1, j-1, j, j+1\}$. Then, by relaxing the integer constraint of the above integer program, we can formulate the \mwis~problem on $G_C$ as the following linear program.
\begin{align}
\label{aln:lp}
\text{maximize }        & \sum_{u\in V(G_C)}w(u)\cdot x(u)\\
\nonumber \text{subject to }      & \sum_{v\in S^r_u}x(v)\leq 1 & \forall u\in V(G_C), \forall r\in M,\\
\nonumber                         & x(u)\geq 0 & \forall u\in V(G_C)
\end{align}

Notice that the linear program~\eqref{aln:lp} has a polynomial number of constraints. These constraints suffice for the \mwis~problem on $G_C$ because, by Observation~\ref{obs:conflictingCondition}, the vertices $u=(a_i, b_j)$ and $v$ of $G_C$ corresponding to the two conflicting edges $e_u$ and $e_v$ in $G_I$ belong to $S^r_u$, for some $r\in M$. Moreover, we observe that any independent set in $G_C$ gives a feasible integral solution to the linear program. Therefore, the value of an optimal (not necessarily integer) solution to the linear program is an upper bound on the value of an optimal integral solution.

We are now ready to describe the algorithm. We first compute an optimal solution $\vec{x}$ for the above linear program. Then, the rounding algorithm applies a local ratio decomposition of the weight vector $\vec{w}$ with respect to $\vec{x}$. See Algorithm~\ref{alg:approxMWIS}. The key to our rounding algorithm is the following lemma.
\begin{lemma}
\label{lem:boundingNs}
Let $\vec{x}$ be a feasible solution to~\eqref{aln:lp}. Then, there exists a vertex $u\in V(G_C)$ such that
\[
\sum_{v\in N[u]}x(v)\leq 6.
\]
\end{lemma}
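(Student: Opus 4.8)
The plan is to exploit the fact that $\vec{x}$ satisfies all six clique constraints of~\eqref{aln:lp} simultaneously, and then to show that if \emph{every} vertex $u$ had $\sum_{v\in N[u]}x(v)>6$, we could derive a contradiction with the constraint that $\sum_{v\in S^r_u}x(v)\le 1$ for the six relevant values $r\in M$. The crucial observation is that $N[u]$ is covered by the six sets $S^{i-1}_u, S^i_u, S^{i+1}_u, S^{j-1}_u, S^j_u, S^{j+1}_u$ together with $u$ itself: by Observation~\ref{obs:conflictingCondition}, any $v\in N[u]$ has its $A$-index in $\{i-1,i,i+1\}$ or its $B$-index in $\{j-1,j,j+1\}$, so $v$ lies in at least one of these six sets (and $u=(a_i,b_j)$ itself lies in $S^i_u$, so even $u$ is accounted for). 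Hence $\sum_{v\in N[u]}x(v)\le\sum_{r\in M}\sum_{v\in S^r_u}x(v)\le 6$ for \emph{every} vertex $u$ --- which would already prove the lemma. However, this naive bound double-counts vertices that lie in more than one $S^r_u$, and more importantly the lemma as stated only needs the bound to hold for \emph{some} $u$; the subtlety is presumably that the author wants a vertex achieving exactly this without relying on the crude union bound, because a later argument (the local-ratio charging in Algorithm~\ref{alg:approxMWIS}) needs the weight decomposition to go through cleanly.

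So the first step I would take is to fix the vertex $u$ that minimizes (or, for a constructive choice, the topmost-leftmost vertex $u=(a_i,b_j)$ with $x(u)>0$ in the lexicographic order on $(i,j)$). For such an extremal $u$, several of the six sets $S^r_u$ collapse: for instance, $S^{i-1}_u$ consists of neighbours $v=(a_{i-1},b_s)$, and by the extremality of $u$ any such $v$ with $x(v)>0$ must satisfy $s\ge j$ --- but conflict with $u$ forces $s\in\{j-1,j,j+1\}$, so $s\in\{j,j+1\}$. I would catalog, set by set, which indices can actually carry positive weight, and show that after removing the forced overlaps the genuinely distinct contributions still fit inside the six unit constraints, now with slack. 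Concretely, $u$ itself lies in $S^i_u\cap S^j_u$, so it is counted (at least) twice in the crude sum $\sum_{r\in M}\sum_{v\in S^r_u}x(v)$; peeling off one copy of $x(u)$ already gives $\sum_{v\in N[u]}x(v)\le 6-x(u)\le 6$, and with the extremality argument further overlaps on the $S^{i-1}_u$/$S^{j-1}_u$ side can be identified and subtracted.

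The main obstacle I anticipate is the careful bookkeeping of overlaps: a neighbour $v$ of $u$ can simultaneously have $A$-index in $\{i-1,i,i+1\}$ \emph{and} $B$-index in $\{j-1,j,j+1\}$, so it is counted twice in $\sum_{r\in M}\sum_{v\in S^r_u}x(v)$, and one must be sure that enough such double-counted (or triple-counted, for $u$) weight is present to bring the total down to $6$ --- or, alternatively, argue that it is never larger than $6$ to begin with because $N[u]\subseteq\bigcup_{r\in M}S^r_u$ and therefore $\sum_{v\in N[u]}x(v)\le\sum_{r\in M}\sum_{v\in S^r_u}x(v)-(\text{overcount})\le 6$. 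The cleanest route, and the one I would write up, is the latter: establish $N[u]\subseteq\bigcup_{r\in M}S^r_u$ via Observation~\ref{obs:conflictingCondition}, note that this containment holds for every $u$, bound the left side by the sum of the six right-hand constraints, and observe that the sum is at most $6$; since this works for every $u$, a fortiori some $u$ (indeed all) satisfies the claimed inequality. The only place where genuine care is needed is checking that the covering $N[u]\subseteq\bigcup_{r\in M}S^r_u$ is exact --- i.e.\ that there is no neighbour of $u$ slipping outside all six sets --- which is exactly the content of Observation~\ref{obs:conflictingCondition} and so is already in hand.
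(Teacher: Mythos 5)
Your final write-up plan --- cover $N[u]$ by the six sets $S^r_u$ via Observation~\ref{obs:conflictingCondition}, bound $\sum_{v\in N[u]}x(v)$ by the sum of the six corresponding clique constraints of~\eqref{aln:lp} (double-counting only weakens the bound since $x\geq 0$), and conclude that the inequality in fact holds for \emph{every} vertex $u$ --- is exactly the paper's proof. The detour through an extremal choice of $u$ and the overlap bookkeeping is unnecessary, as you yourself correctly conclude.
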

\begin{proof}
Let $u\in V(G_C)$. Notice that $u$ corresponds to an edge in $G_I$; that is, $u=(a_i, b_j)$, where $a_i$ and $b_j$ are a pair of preserved duos in the mapping $\pi$ from $s_1$ to $s_2$. Observe that $v\in N[u]$ for some $v=(a_k, b_l)\in V(G_C)$ if and only if $(a_k, b_l)$ conflicts with $(i, j)$ in $G_I$. Let $M:=\{i-1, i, i+1, j-1, j, j+1\}$ and define the set $S^r_u$ as above, for all $r\in M$. Note that the vertices in $S^r_u$ form a clique in $G_C$, for all $r\in M$, because the set of edges corresponding to the vertices of $S^r_u$ in $G_I$ all share one endpoint (in particular, this endpoint is in $A$ if $r\in\{i-1, i, i+1\}$ or it is in $B$ if $r\in\{j-1, j, j+1\}$). See Figure~\ref{fig:cliques} for an illustration. This means by the first constraint of the linear program~\eqref{aln:lp} that
\[
\sum_{v\in S^r_u}x(v)\leq 1,
\]
for all $r\in M$. Therefore, we have
\[
\sum_{v\in N[u]}x(v)\leq\sum_{r\in M}\sum_{v\in S^r_u}x(v)=|M|=6.
\]
This completes the proof of the lemma.
\end{proof}

\begin{algorithm}[t]
\caption{\textsc{ApproximateMWIS}($G_C$)}
\label{alg:approxMWIS}
\begin{algorithmic}[1]
\State Delete all vertices with non-positive weight. If no vertex remains, then return the empty set;
\State Let $u\in V(G_C)$ be a vertex satisfying
\[
\sum_{v\in N[u]}x(v)\leq 6.
\]
Then, decompose $\vec{w}$ into $\vec{w}:=\vec{w_1}+\vec{w_2}$ as follows:
\[
w_1(v):=
     \begin{cases}
       w(u) &\quad\text{if } v\in N[u],\\
       0 &\quad\text{otherwise.}\\
     \end{cases}
\]
\State Solve the problem recursively using $\vec{w_2}$ as the weight vector. Let $S'$ be the independent set returned;
\State If $u$ is not adjacent with some vertex in $S'$, then return $S:=S'\cup\{u\}$; otherwise, return $S:=S'$.
\end{algorithmic}
\end{algorithm}

We now analyze the algorithm. First, the set $S$ returned by the algorithm is clearly an independent set. The following lemma establishes the approximation factor of the algorithm.
\begin{lemma}
\label{lem:approxFactor}
Let $\vec{x}$ be a feasible solution to~\eqref{aln:lp}. Then, $w(S)\geq \frac{1}{6}(\vec{w}\cdot\vec{x})$.
\end{lemma}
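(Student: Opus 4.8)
The plan is to prove the statement by induction on the number of vertices of $G_C$, following the recursive structure of Algorithm~\ref{alg:approxMWIS}; this is the usual local‑ratio pattern, except that the benchmark we track is the linear‑programming value $\vec{w}\cdot\vec{x}$ rather than the optimum. Two preliminary observations make the bookkeeping clean. First, deleting a vertex of non‑positive weight in Step~1 can only decrease $\vec{w}\cdot\vec{x}$, because $x(v)\ge 0$ for every $v$; hence if $G_C'$ denotes the graph surviving Step~1, it suffices to prove $w(S)\ge\tfrac16\sum_{v\in V(G_C')}w(v)x(v)$, and since $\vec{w}\cdot\vec{x}\le\sum_{v\in V(G_C')}w(v)x(v)$ the lemma follows. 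Second, restricting $\vec{x}$ to $V(G_C')$ keeps it feasible for~\eqref{aln:lp} on $G_C'$ (dropping variables and constraints preserves feasibility), so the inductive hypothesis applies to the recursive call. The base case is when $G_C'$ is empty: then every vertex has $w(v)\le 0$, so $\sum_v w(v)x(v)\le 0=w(\emptyset)=w(S)$ and the inequality is trivial.

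For the inductive step, assume $G_C'\neq\emptyset$, let $u\in V(G_C')$ be the vertex picked in Step~2 (it exists by Lemma~\ref{lem:boundingNs}), and write $\vec{w}=\vec{w_1}+\vec{w_2}$ for the Step~2 decomposition restricted to $V(G_C')$. The first key point is that $w_2(u)=w(u)-w_1(u)=w(u)-w(u)=0$, since $u\in N[u]$. This does double duty: it means $u$ has non‑positive $\vec{w_2}$‑weight and is therefore removed at the top of the recursive call, so the vertex set strictly shrinks and the induction is well‑founded; and it means that whether or not Step~4 adds $u$, we have $w_2(S)=w_2(S')$. Applying the inductive hypothesis to the recursive instance on $G_C'$ with weights $\vec{w_2}$ gives $w_2(S')\ge\tfrac16\sum_{v\in V(G_C')}w_2(v)x(v)$, hence $w_2(S)\ge\tfrac16\sum_{v\in V(G_C')}w_2(v)x(v)$.

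The second key point handles $\vec{w_1}$, and it is where Lemma~\ref{lem:boundingNs} is used. Since $w_1$ is supported on $N[u]$ with constant value $w(u)$, we have $w_1(S)=w(u)\cdot|S\cap N[u]|$ and $\sum_{v\in V(G_C')}w_1(v)x(v)=w(u)\sum_{v\in N[u]}x(v)\le 6\,w(u)$ by the choice of $u$. It remains to check $|S\cap N[u]|\ge 1$: if Step~4 adds $u$, then $u\in S\cap N[u]$; otherwise $u$ is adjacent in $G_C'$ to some vertex of $S'=S$, and that vertex lies in $N[u]\cap S$. Because $u$ survived Step~1 we have $w(u)>0$, so $w_1(S)\ge w(u)\ge\tfrac16\sum_{v\in V(G_C')}w_1(v)x(v)$. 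Adding the two bounds, $w(S)=w_1(S)+w_2(S)\ge\tfrac16\sum_{v\in V(G_C')}\bigl(w_1(v)+w_2(v)\bigr)x(v)=\tfrac16\sum_{v\in V(G_C')}w(v)x(v)\ge\tfrac16(\vec{w}\cdot\vec{x})$, completing the induction. (One can phrase the final combination as an application of Theorem~\ref{thm:localRatio} with $\vec{w}\cdot\vec{x}$ in the role of the optimum, but spelling it out is just as short.)

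As for difficulty: this is a textbook local‑ratio analysis, so no step is genuinely hard. The two places a careless proof could slip are (i) correctly accounting for the vertices removed in Step~1 — noting that they only decrease $\vec{w}\cdot\vec{x}$ and that the restricted $\vec{x}$ stays LP‑feasible — and (ii) the identity $w_2(u)=0$, which is what simultaneously forces the recursion to terminate and makes $w_2(S)=w_2(S')$. The combinatorial core, that $S$ always meets $N[u]$, is a one‑line case split on the outcome of Step~4.
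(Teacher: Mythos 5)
Your proof is correct and follows essentially the same local-ratio induction as the paper: the same decomposition $\vec{w}=\vec{w_1}+\vec{w_2}$, the observation $w_2(u)=0$, the fact that $S$ meets $N[u]$ (giving $w_1(S)\ge w(u)$), and Lemma~\ref{lem:boundingNs} to bound $\vec{w_1}\cdot\vec{x}\le 6w(u)$. You are in fact slightly more careful than the paper about the two bookkeeping points (the effect of Step~1 on the benchmark and the feasibility of the restricted $\vec{x}$ for the recursive instance), though your phrase that deleting non-positive-weight vertices ``can only decrease $\vec{w}\cdot\vec{x}$'' is worded backwards — the displayed inequality $\vec{w}\cdot\vec{x}\le\sum_{v\in V(G_C')}w(v)x(v)$ that you actually use is the correct one.
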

\begin{proof}
We prove the lemma by induction on the number of recursive calls. In the base case, the set returned by the algorithm satisfies the lemma because no vertices have remained. Moreover, the first step that removes all vertices with non-positive weight cannot decrease the right-hand side of the above inequality.

We now prove the induction step. Suppose that $\vec{z}$ and $\vec{z'}$ correspond to the indicator vectors for $S$ and $S'$, respectively. By induction, $\vec{w_2}\cdot\vec{z'}\geq\frac{1}{6}(\vec{w_2}\cdot\vec{x})$. Since $w_2(u)=0$, we have $\vec{w_2}\cdot\vec{z}\geq\frac{1}{6}(\vec{w_2}\cdot\vec{x})$. From the last step of the algorithm, we know that at least one vertex from $N[u]$ is in $S$ and so we have
\[
\vec{w_1}\cdot\vec{z}=w(u)\sum_{v\in N[u]}z(v)\geq w(u).
\]
Moreover, by Lemma~\ref{lem:boundingNs},
\[
\vec{w_1}\cdot\vec{x}=w(u)\sum_{v\in N[u]}x(v)\leq 6w(u).
\]
Hence, $\vec{w_1}\cdot\vec{x}\leq 6w(u)\leq 6(\vec{w_1}\cdot\vec{z})$, which gives $\vec{w_1}\cdot\vec{z}\geq\frac{1}{6}(\vec{w_1}\cdot\vec{x})$. Therefore, we conclude that $(\vec{w_1}+\vec{w_2})\cdot\vec{z}\geq\frac{1}{6}(\vec{w_1}+\vec{w_2})\cdot\vec{x}$ and so $w(S)\geq\frac{1}{6}\vec{w}\cdot\vec{x}$. This completes the proof of the lemma.
\end{proof}

\begin{figure}[t]
\centering
\includegraphics[scale=0.65]{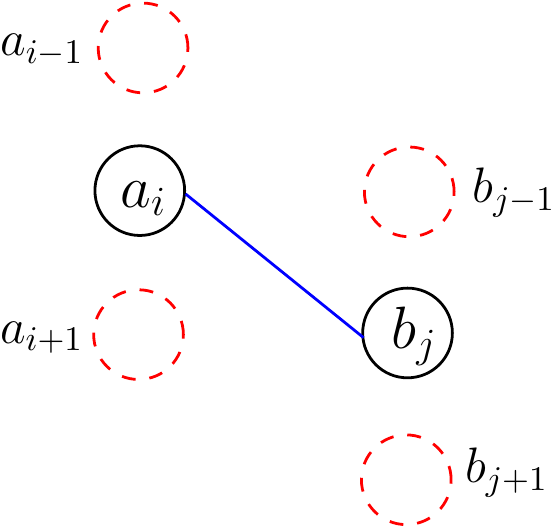}
\caption{Graph $G_I$ with edge $u=(a_i, b_j)$. The edge corresponding to any vertex $v\in N[u]$ in $G_C$ is incident to at least one of the six vertices in $\{i-1, i, i+1, j-1, j, j+1$.}
\label{fig:cliques}
\end{figure}

Since there exists at least one vertex $u$ for which $w_2(u)=0$ in each recursive step, Algorithm~\ref{alg:approxMWIS} terminates in polynomial time. Therefore, by Lemmas~\ref{lem:mwcmAndMWDSM},~\ref{lem:reductionToMIS} and~\ref{lem:approxFactor}, we have the main result of this paper.
\begin{theorem}
\label{thm:mainResult}
There exists a polynomial-time 6-approximation algorithm for the \mwdsm~problem on $s_1$ and $s_2$.
\end{theorem}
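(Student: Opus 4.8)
The plan is to assemble the three ingredients already established in the excerpt into a single chain of reductions, and then to invoke the local ratio theorem (Theorem~\ref{thm:localRatio}) to certify the approximation factor of Algorithm~\ref{alg:approxMWIS}. Concretely, I would first recall that by Lemma~\ref{lem:mwcmAndMWDSM} the \mwdsm~problem on $s_1$ and $s_2$ reduces to the \mwcm~problem on $G_I$, and by Lemma~\ref{lem:reductionToMIS} the latter reduces to the \mwis~problem on the conflict graph $G_C$; hence it suffices to exhibit a polynomial-time $6$-approximation for \mwis~on $G_C$, since reductions preserve the approximation factor and all three graphs have size polynomial in $n$.

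Next I would argue that Algorithm~\ref{alg:approxMWIS} is exactly such an algorithm. The feasibility of its output is immediate from step~4 (a vertex is only added when it has no neighbour already chosen), so the content is the approximation guarantee. Here the key point is that the linear program~\eqref{aln:lp} has a polynomial number of constraints, can be solved in polynomial time, and its optimum $\vec{w}\cdot\vec{x}^{\mathrm{LP}}$ upper bounds the value of an optimal \emph{integral} independent set (as noted after~\eqref{aln:lp}, every independent set yields a feasible integral point). Lemma~\ref{lem:boundingNs} guarantees that the vertex $u$ required in step~2 always exists, so the algorithm is well defined; and since $w_2(u)=0$ for that $u$, each recursive call strictly decreases the number of positive-weight vertices, which bounds the recursion depth (and total running time) by a polynomial. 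Finally, Lemma~\ref{lem:approxFactor} gives $w(S)\ge\frac16(\vec{w}\cdot\vec{x})$ for any feasible $\vec{x}$, and applying it to the LP optimum yields $w(S)\ge\frac16(\vec{w}\cdot\vec{x}^{\mathrm{LP}})\ge\frac16\,\mathrm{OPT}_{\mwis}$, so Algorithm~\ref{alg:approxMWIS} is a polynomial-time $6$-approximation for \mwis~on $G_C$.

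Pulling the chain back together: a $6$-approximate independent set in $G_C$ translates, via the weight-preserving correspondence in the proof of Lemma~\ref{lem:reductionToMIS}, into a $6$-approximate constrained matching $M$ in $G_I$, which in turn (Lemma~\ref{lem:mwcmAndMWDSM} and the discussion preceding it) induces a mapping $\pi$ between $s_1$ and $s_2$ preserving a set of duos of total weight $w(M)$, i.e.\ a $6$-approximate solution to \mwdsm. Since every step in this chain — constructing $G_I$ and $G_C$, writing down and solving~\eqref{aln:lp}, and running the bounded-depth recursion of Algorithm~\ref{alg:approxMWIS} — runs in time polynomial in $n$, the composite algorithm is a polynomial-time $6$-approximation for \mwdsm, which is the claimed Theorem~\ref{thm:mainResult}.

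I do not anticipate a genuine obstacle, since all the hard work is isolated in the earlier lemmas; the only point deserving a word of care is the implicit claim that the local ratio decomposition in Algorithm~\ref{alg:approxMWIS} is a legitimate instance of Theorem~\ref{thm:localRatio} — that is, that $\vec{w_1}$ really is an ``easy'' weight vector for which \emph{any} maximal-with-respect-to-$N[u]$ independent set is a $6$-approximation, and that $\vec{w}=\vec{w_1}+\vec{w_2}$ with $\vec{w_2}$ still nonnegative on the surviving vertices so the recursion stays within the same problem family. Both of these are exactly what the induction in Lemma~\ref{lem:approxFactor} verifies, so in the proof of Theorem~\ref{thm:mainResult} itself I would simply cite Lemmas~\ref{lem:mwcmAndMWDSM},~\ref{lem:reductionToMIS}, and~\ref{lem:approxFactor} together with the polynomial bounds on $|V(G_C)|$, on the number of LP constraints, and on the recursion depth, and conclude.
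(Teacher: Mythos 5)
Your proposal is correct and follows essentially the same route as the paper: the paper's own proof of Theorem~\ref{thm:mainResult} is precisely the chain Lemma~\ref{lem:mwcmAndMWDSM} $\rightarrow$ Lemma~\ref{lem:reductionToMIS} $\rightarrow$ Lemma~\ref{lem:approxFactor} applied to the LP optimum of~\eqref{aln:lp}, together with the observation that each recursive call zeroes out at least one vertex's weight, so the recursion depth (and hence the running time) is polynomial. The only difference is that you spell out some details the paper leaves implicit (why the LP optimum upper-bounds $\mathrm{OPT}_{\mwis}$ and why the local ratio decomposition is legitimate), which does not change the argument.
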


%%%%%%%%%%%% NEW SECTION %%%%%%%%%%%%%%%%%
\section{Conclusion}
\label{sec:conclusion}
In this paper, we studied a weighted version of the \mdsm~problem~\cite{ChenCSPWT14} that considers the position of the preserved duos in the respective input strings (i.e., the \mwdsm~problem). This is a natural variant of the problem, as considering the position of the preserved features in the strings provides solutions with better quality in many applications, such as in comparative genomics in which more weight could indicate more synteny between the corresponding preserved features. We gave a polynomial-time 6-approximation algorithm for the \mwdsm~problem using the local ratio technique. Although the approximation factor of our algorithm is a bit large in compare to that of algorithms for the unweighted version of the problem, our algorithm is much simpler as it benefits from the simplicity of the local ratio technique. Giving approximation algorithms with better approximation factors for the \mwdsm~problem remains open for future work.

\bibliographystyle{plain}
\bibliography{ref}

\end{document}